\newcommand{\keywords}[1]{\par\addvspace\baselineskip
\noindent\keywordname\enspace\ignorespaces#1}
\newcolumntype{C}[1]{>{\centering\arraybackslash}p{#1}}
\begin{document}

\mainmatter  

\title{Notes on Stack Machines and Quantum Stack Machines}

\titlerunning{Stack Machines }

\author{Daowen Qiu}
\authorrunning{Daowen Qiu}
\institute{School of Computer Science and Engineering, Sun Yat-sen
University, Guangzhou 510006, China\\
\mailsa\\}

\toctitle{Simulations of Quantum Turing Machines } \tocauthor{
Daowen Qiu} \maketitle
\begin{abstract}

Multi-stack machines and Turing machines can simulate to each other. In this note, we give a succinct definition of multi-stack machines, and from this definition it is clearly seen that pushdown automata and deterministic finite automata are special cases of multi-stack machines. Also, with this mode of definition, pushdown automata and deterministic pushdown automata are equivalent and recognize all context-free languages. In addition, we are motivated to formulate concise definitions of quantum pushdown automata and quantum stack machines.

\keywords{ Multi-Stack Machines, Pushdown Automata;  Finite Automata; Turing Machines; Quantum Pushdown Automata; Quantum Multi-Stack Machines.}
\end{abstract}

In the theory of classical computation \cite{Hop79}, both multi-stack
machines, as a generalization of pushdown automata, and multi-counter
machines can efficiently simulate Turing machines (two-stack
machines and two-counter
machines can efficiently simulate Turing machines.)
\cite{Min61,Fi66,Hop79}.  

A different definition of pushdown automata was given in \cite{Watrous}, and in this note we will further show this definition is equivalent to the traditional pushdown automata \cite{Hop79}.   One-way nondeterministic stack automata were studied from the point of view of space complexity \cite{IJMP2020}. Quantum pushdown automata and quantum stack machines as well quantum counter machines were studied in \cite{Go00,Qiu05,Nak14,Kra99,YKI05}. However, these quantum computing models seem somewhat complex in practice from the viewpoint of physical realizability.

Now we give a new definition of two-stack machines (multi-stack machines are similar).  First, we need some notations. Let $\Gamma$ denote a finite set of stacks, and let $\varepsilon$ denote an empty string. Denote $\Gamma_i(\downarrow)=\{X(i,\downarrow): X\in \Gamma\}$ and 
$\Gamma_i(\uparrow)=\{X(i,\uparrow): X\in \Gamma\}$, $i=1,2$, where $X(i,\downarrow)$ and $X(i,\uparrow)$ denote ``push" $X$ and ``pop" $X$ in stack $i$, respectively. In addition, we denote $\Gamma_i(\updownarrow)=\Gamma_i(\downarrow)\cup \Gamma_i(\uparrow)$, and
\begin{equation}
\Gamma_{1,2}(\updownarrow)=(\Gamma_1(\updownarrow)\times \Gamma_2(\updownarrow))\cup (\Gamma_1(\updownarrow)\times \{\varepsilon\})\cup (\{\varepsilon\}\times \Gamma_2(\updownarrow)).
\end{equation}
More exactly, we can represent $\Gamma_{1,2}(\updownarrow)$ as follows.
\begin{eqnarray}
\Gamma_{1,2}(\updownarrow)&=&\{(A,B): A\in  \Gamma_1(\updownarrow),B\in  \Gamma_2(\updownarrow)\} \nonumber \\
&&\cup \{(A,\varepsilon):A\in \Gamma_1(\updownarrow)\}\cup \{(\varepsilon, B):B\in \Gamma_2(\updownarrow)\}.
\end{eqnarray}

Without the number of stack, $\Gamma(\downarrow)=\{X(\downarrow): X\in \Gamma\}$ and 
$\Gamma(\uparrow)=\{X(\uparrow): X\in \Gamma\}$, where $X(\downarrow)$ and $X(\uparrow)$ denote ``push" $X$ and ``pop" $X$, respectively.

\begin{definition}
Let $\Gamma$ be a set of stack.  A string $x\in \Gamma_i(\updownarrow)^*$ ($i=1,2$) is called to be valid in stack $i$ if its changing from the most left to the most right in $x$ leads to the $i$th stack being empty for prior empty stack $i$. A string $(A_1,B_1)(A_2,B_2)\ldots (A_k,B_k)\in \Gamma_{1,2}(\updownarrow)^*$ is called to be valid if both $A_1A_2\ldots A_k$ and $B_1B_2\ldots B_k$ are valid in stacks $1$ and $2$, respectively.

\end{definition}

For example, for $X,Y\in \Gamma$, then both $X(1,\downarrow)Y(1,\downarrow)X(1,\downarrow)X(1,\uparrow)Y(1,\uparrow)X(1,\uparrow)$ and $Y(2,\downarrow)X(2,\downarrow)X(2,\uparrow)Y(2,\downarrow)Y(2,\uparrow)X(2,\uparrow)$ are valid, and therefore
\begin{eqnarray}
&&(X(1,\downarrow), Y(2,\downarrow))(Y(1,\downarrow),X(2,\downarrow))(X(1,\downarrow),X(2,\uparrow))(X(1,\uparrow),Y(2,\downarrow))\nonumber\\
&&(Y(1,\uparrow),Y(2,\uparrow))(X(1,\uparrow),X(2,\uparrow))
\end{eqnarray}
is valid. Clearly, 
$X(1,\downarrow)Y(1,\downarrow)X(1,\downarrow)Y(1,\uparrow)Y(1,\uparrow)X(1,\uparrow)$ is invalid.

\begin{definition}
A  two-stack machine is defined
as $M=(Q,\Sigma,\Gamma, \Delta, \delta, q_{0}, F)$ where
$Q$ is a finite set of states, $\Sigma$ is a finite input alphabet, $\Gamma$
is the stack alphabet, $\Delta$ is a finite set of tape symbols, $q_{0}\in Q$ is the initial
state, and $F\subseteq Q$ denotes the set of accepting states, and transition function $\delta$ is defined as
follows:
\begin{equation}
\delta:
Q\times(\Gamma_{1,2}(\updownarrow)\cup \Sigma\cup\Delta)\rightarrow Q.
\end{equation}

\end{definition}

As deterministic finite automata, it is clear that $M$ outputs a state in $Q$ for any inputting string $x\in(\Gamma_{1,2}(\updownarrow)\cup \Sigma\cup\Delta)^*$ with the initial state $q_0$, and this outputted state is represented by $\delta^*(q_0,x)$, as that in deterministic finite automata.

 We still need some notations. Let $S_1$ and $S_2$ be two sets. Then, for any string $x\in (S_1\cup S_2)^*$, $[x]_{S_1}$ denotes the string that is obtained by deleting all elements not in $S_1$ from $x$, that is to say, $[x]_{S_1}$ is a string by only keeping the elements in $S_1$ from $x$.
 
 The computing procedure of this two-stack machine $M$ can be formally defined and described as follows.
 
 \begin{definition} Given a finite alphabet $\Sigma$, and for any input string $x=a_1a_2\cdots a_k\in\Sigma^{*}$, $x$ is called to be accepted  by $M=(Q,\Sigma,\Gamma, \Delta, \delta, q_{0}, F)$, if there is a string $s\in (\Gamma_{1,2}(\updownarrow)\cup \Sigma\cup\Delta)^*$ satisfying the following conditions:
\begin{enumerate}
\item $[s]_{\Sigma}=x$.
\item $\delta^*(q_0,s)\in F$.
\item $[s]_{\Gamma_{1,2}(\updownarrow)}$ is a valid string of stacks changing.
\end{enumerate}

If $\delta^*(q_0,s)\in Q\setminus F$, then $x$ is called to be rejected  by $M$.

\end{definition}

A language $L\subseteq\Sigma^*$ is called to be recognized by a two-stack machine $M$, if for any $x\in L$, $x$ is accepted by $M$, and for any $x\in \Sigma^*\setminus L$, $x$ is rejected by $M$.

Next we present two non-context free languages that can be recognized by two-stack machines.

\begin{example}
The language $L_{eq}=\{0^n1^n2^n: n\in N\}$ can be recognized by a two-stack machine $M(L_{eq})=(Q,\Sigma,\Gamma, \Delta, \delta, q_{0}, F)$ illustrated by the following figure of transformation, where 
\begin{itemize}
\item $Q=\{q_0,q_1,q_2,q_3,q_1^{'},q_2^{'},q_3^{'},q_a\}$;
\item $\Sigma=\{0,1,2\}$;
\item $\Gamma=\{Z_0,X\}$;
\item  $\Delta=\{t_0\}$, $q_0$ and $q_a$ ($F=\{q_a\}$) are initial and accepting states, respectively,
\end{itemize}
and $\delta$ is illustrated in Fig.1 as follows.

\begin{figure*}[htbp]
		\centering
		\includegraphics[width=\linewidth]{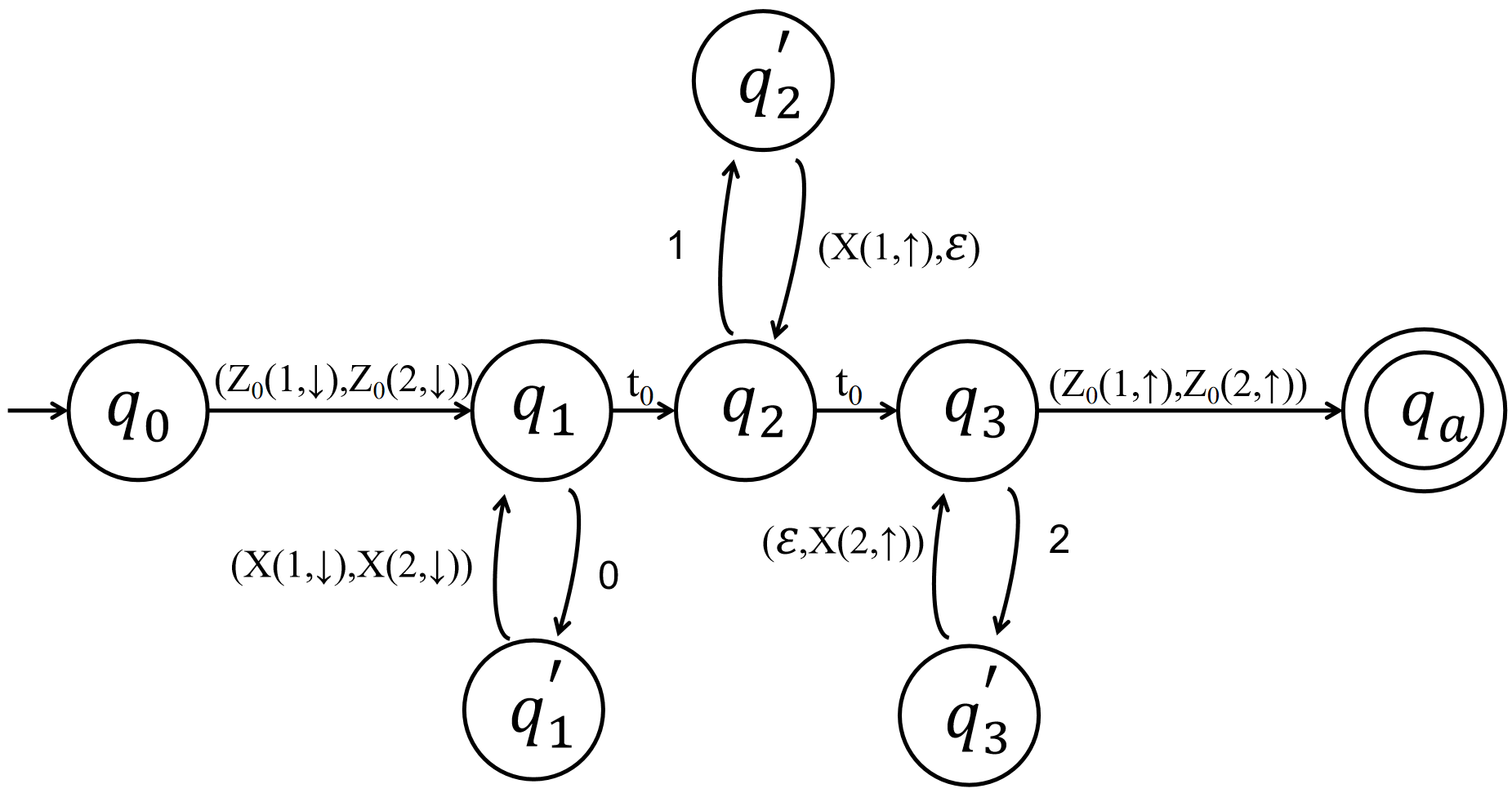}
		\setlength{\abovecaptionskip}{-0.01cm}
		\caption{Two-stack machine $M(L_{eq})$ recognizes $L_{eq}$.}
		\label{1}
	\end{figure*}
	
\end{example}	
	
	
	We give another example in the following.
	
	\begin{example}	

The language $L_{w}=\{w\#w: w\in \{0,1\}^*\}$ can be recognized by a two-stack machine $M(L_{w})=(Q,\Sigma,\Gamma, \Delta, \delta, q_{0}, F)$ illustrated by the following figure of transformation, where 
\begin{itemize}
\item $Q=\{q_0,q_1,q_2,q_1^{'},q_1^{''}, q_2^{'},q_2^{''},q_a\}$;
\item $\Sigma=\{0,1,\#\}$;
\item $\Gamma=\{Z_0,X\}$;
\item  $\Delta=\emptyset$, $q_0$ and $q_a$ ($F=\{q_a\}$) are initial and accepting states, respectively,
\end{itemize}
and $\delta$ is illustrated in Fig.2 as follows.

\begin{figure*}[htbp]
		\centering
		\includegraphics[width=\linewidth]{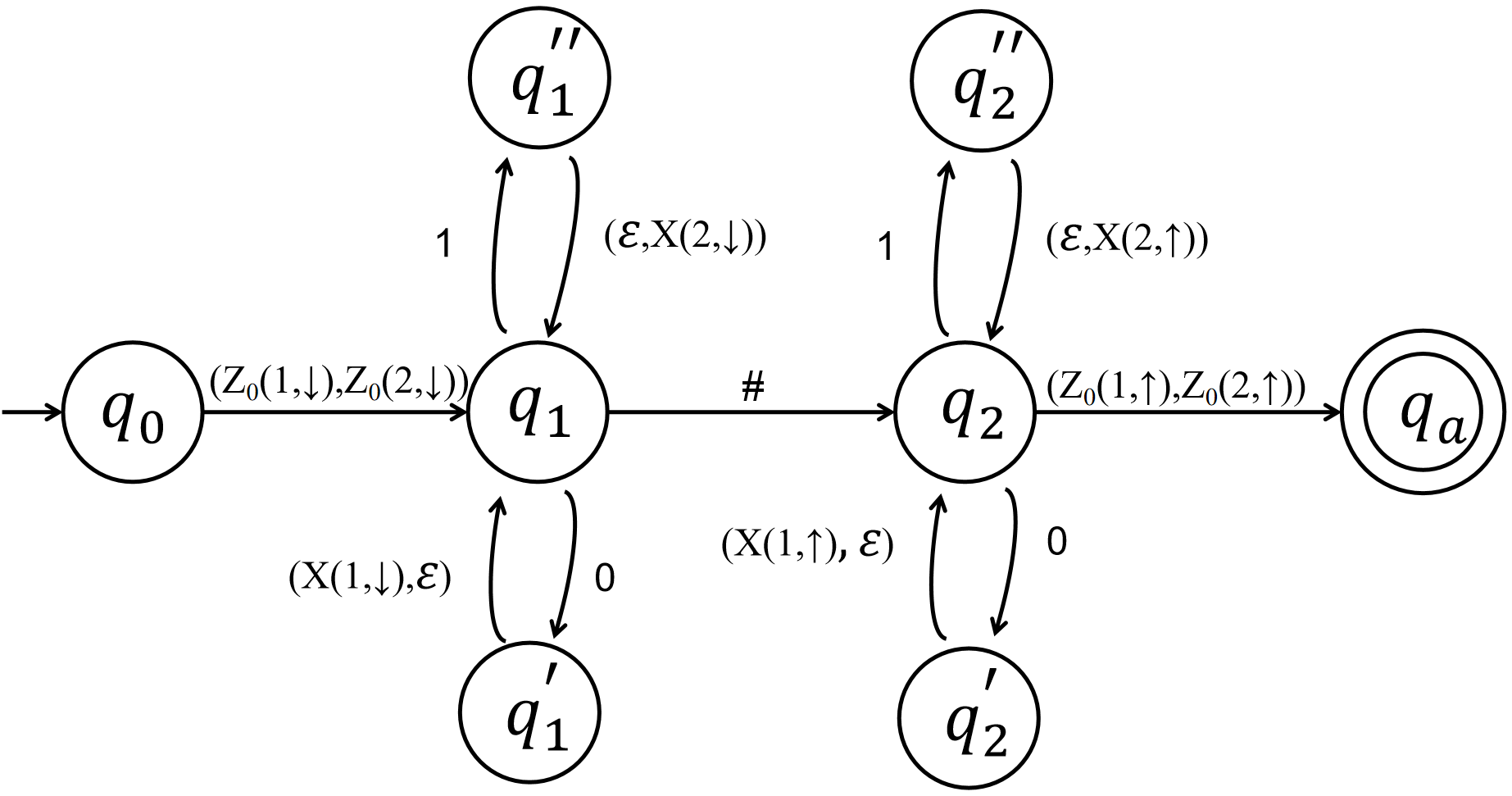}
		\setlength{\abovecaptionskip}{-0.01cm}
		\caption{Two-stack machine $M(L_{w})$ recognizes $L_{w}$.}
		\label{2}
	\end{figure*}

\end{example}
\newpage

Next we show the equivalence between the traditional pushdown automata in \cite{Hop79} and the  redefined pushdown automata in \cite{Watrous} that can be thought of as one-stack machines without tape symbols defined above.
First let us recall the two definitions.

\begin{definition}
\cite{Hop79} A pushdown automaton  can be defined as 
\[
M = (Q, \Sigma, \Gamma, \delta, q_0, Z_0, F)
\]
where
\begin{itemize}
    \item \( Q \) is finite set of states;
    \item \( \Sigma \) is finite input alphabet;
    \item \( \Gamma \) is finite set of stacks;
    \item \( \delta: Q \times (\Sigma \cup \{\varepsilon\}) \times \Gamma \to 2^{Q \times \Gamma^*} \) is a transform function;
    \item $q_0\in Q$ is an initial state;
    \item \( Z_0\in \Gamma \) is an initial stack symbol;
    \item \( F\subseteq Q \) is a set of accepting states.
\end{itemize}

\end{definition}

The above model is named as  pushdown automata-I (PDA-I) that recognize all context-free languages \cite{Hop79}.

\begin{lemma}\cite{Hop79}
Let $\Sigma$ be a finite alphabet. For any language $L\subseteq \Sigma^*$, then $L$ is a context-free language if and only of $L$ is recognized by a pushdown automaton-I.

\end{lemma}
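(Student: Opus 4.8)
The plan is to prove the two implications of the biconditional separately, after first recording the standard equivalence between the two natural acceptance modes of a PDA-I: acceptance by final state (as in the definition above, via $F$) and acceptance by empty stack. I would first establish this auxiliary equivalence in both directions. From a machine accepting by final state one builds a machine accepting by empty stack by adjoining a fresh bottom marker and a new ``dumping'' state that, once a final state is reached, erases the whole stack; conversely, from an empty-stack machine one builds a final-state machine by adjoining a bottom marker whose exposure signals acceptance. With this in hand I may freely adopt whichever convention is more convenient in each direction.

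For the direction ``$L$ context-free $\Rightarrow$ $L$ recognized by a PDA-I'', I would start from a context-free grammar $G=(V,\Sigma,P,S)$ with $L(G)=L$ and construct a one-state PDA $M$ accepting by empty stack that simulates leftmost derivations of $G$ top-down. Its stack alphabet is $V\cup\Sigma$, it starts with $S$ on the stack, and it has two kinds of moves: for each production $A\to\alpha$ in $P$ an $\varepsilon$-move $\delta(q,\varepsilon,A)\ni(q,\alpha)$ that replaces the top nonterminal $A$ by the right-hand side $\alpha$, and for each terminal $a\in\Sigma$ a matching move $\delta(q,a,a)\ni(q,\varepsilon)$ that consumes an input symbol agreeing with the stack top. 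The invariant is that at every step the stack content, read top-to-bottom, is exactly the suffix of the current left-sentential form not yet matched against the input; a routine induction on derivation length then gives $S\Rightarrow^{*}w$ in $G$ iff $M$ empties its stack on input $w$.

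For the converse ``$L$ recognized by a PDA-I $\Rightarrow$ $L$ context-free'', I would take $M=(Q,\Sigma,\Gamma,\delta,q_0,Z_0,F)$ accepting by empty stack and build the standard ``triple'' grammar. Its variables are a start symbol $S$ together with all symbols $[q,A,p]$ for $q,p\in Q$ and $A\in\Gamma$, with intended meaning that $[q,A,p]\Rightarrow^{*}w$ exactly when $M$, started in state $q$ with $A$ on top of the stack, can read $w$ and reach state $p$ having removed $A$ together with everything pushed on top of it, without touching the symbols below $A$. The productions are $S\to[q_0,Z_0,q]$ for every $q\in Q$, and for each move $(p,B_1B_2\cdots B_m)\in\delta(q,a,A)$ with $a\in\Sigma\cup\{\varepsilon\}$ all productions $[q,A,q_m]\to a\,[p,B_1,q_1][q_1,B_2,q_2]\cdots[q_{m-1},B_m,q_m]$, ranging over all intermediate states $q_1,\dots,q_m\in Q$, with the degenerate case $m=0$ giving $[q,A,p]\to a$.

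The main obstacle is the correctness of this triple construction, which must be established by proving the bracket semantics $[q,A,p]\Rightarrow^{*}w \iff (q,w,A)\vdash^{*}_{M}(p,\varepsilon,\varepsilon)$ in both directions by induction, one on the length of the grammar derivation and the other on the number of machine steps. The delicate point is the bookkeeping for a multi-symbol push $B_1\cdots B_m$: a computation that nets the removal of all of $B_1,\dots,B_m$ must be factored uniquely into consecutive subcomputations, the $j$th of which removes $B_j$ while passing from state $q_{j-1}$ to state $q_j$, and this factorization must be shown to correspond exactly to the way the right-hand side of the production splits into the blocks $[q_{j-1},B_j,q_j]$. Once both inductions are complete, combining them with the start productions yields $S\Rightarrow^{*}w$ iff $(q_0,w,Z_0)\vdash^{*}_{M}(q,\varepsilon,\varepsilon)$ for some $q$, i.e.\ iff $M$ accepts $w$ by empty stack; together with the two simulation arguments and the acceptance-mode equivalence, this closes the biconditional.
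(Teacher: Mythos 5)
The paper offers no proof of this lemma; it is stated as a known result with a citation to Hopcroft and Ullman. Your proposal correctly reconstructs the standard argument from that reference --- the equivalence of final-state and empty-stack acceptance, the one-state PDA simulating leftmost derivations for one direction, and the triple construction $[q,A,p]$ with its induction on derivation length and on machine steps for the converse --- so it is a sound and complete proof of exactly the statement being cited.
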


Another type of pushdown automata given in \cite{Watrous} is called as pushdown automata-II (PDA-II) and defined as follows:
\begin{definition}
A pushdown automaton-II (PDA-II)  can be defined as
 \[
M= (Q, \Sigma, \Gamma, \delta, q_0, F)
\]
where
\begin{itemize}
   \item \( Q \) is finite set of states;
    \item \( \Sigma \) is finite input alphabet;
    \item \( \Gamma \) is finite set of stacks;
    \item \( \delta: Q \times (\Sigma \cup \{\varepsilon\} \cup  \Gamma(\updownarrow)\}) \to 2^{Q} \) a transform function, where $\Gamma(\updownarrow)=\{X(\downarrow):X\in\Gamma\}\cup\{X(\uparrow):X\in\Gamma\}$;
    \item $q_0\in Q$ is an initial state;
    \item \( F\subseteq Q \) is a set of accepting states.
\end{itemize}

\end{definition}

We describe the computing procedure of PDA-II that  is   formally defined  as follows.
 
 \begin{definition} Given a finite alphabet $\Sigma$, and for any input string $x=a_1a_2\cdots a_k\in\Sigma^{*}$, $x$ is called to be accepted  by the PDA-II $M=(Q,\Sigma,\Gamma, \delta, q_{0}, F)$, if there is a string $s\in (\Gamma(\updownarrow)\cup \Sigma\cup\{\varepsilon\})^*$ satisfying the following conditions:
\begin{enumerate}
\item $[s]_{\Sigma}=x$.
\item $\delta^*(q_0,s)\cap F\not=\emptyset$.
\item $[s]_{\Gamma(\updownarrow)}$ is a valid string of stacks changing.
\end{enumerate}

Otherwise, $x$ is called to be rejected  by $M$.

\end{definition}

The following lemma comes from \cite{Watrous}. 

\begin{lemma}\cite{Watrous}
Let $\Sigma$ be a finite alphabet. For any language $L\subseteq \Sigma^*$, then $L$ is a context-free language if and only of $L$ is recognized by a pushdown automaton-II.

\end{lemma}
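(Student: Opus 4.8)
The plan is to leverage the already-stated Lemma that $L$ is context-free if and only if $L$ is recognized by a PDA-I. Given this, it suffices to prove that the PDA-I and PDA-II models recognize exactly the same class of languages, which I would establish by two mutual simulations: every PDA-I can be converted into an equivalent PDA-II, and conversely. The two models differ in three ways that the construction must reconcile. PDA-I combines ``pop one symbol / push a string'' with reading input in a single transition, whereas PDA-II performs one atomic push $X(\downarrow)$, pop $X(\uparrow)$, or input read per step; PDA-I starts with the marker $Z_0$ on the stack while PDA-II starts empty; and PDA-I accepts by final state alone, whereas a PDA-II string is accepting only if, in addition to $\delta^*(q_0,s)\cap F\neq\emptyset$, its stack projection $[s]_{\Gamma(\updownarrow)}$ is \emph{valid}, i.e.\ returns the stack to empty.

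First I would simulate a PDA-I $M=(Q,\Sigma,\Gamma,\delta,q_0,Z_0,F)$ by a PDA-II $M'$. From a fresh start state I add the single step $Z_0(\downarrow)$ into $q_0$, so that $M'$ begins its work with $Z_0$ on its (initially empty) stack. Each PDA-I transition $(p,Y_1\cdots Y_m)\in\delta(q,a,Z)$ is expanded into a chain of atomic PDA-II steps through fresh intermediate states: pop $Z$ via $Z(\uparrow)$, read $a$ (or $\varepsilon$), then push $Y_m(\downarrow),\ldots,Y_1(\downarrow)$ so that $Y_1$ ends on top, finally entering $p$. To recover the ``accept by final state'' convention I add, from every $f\in F$, an $\varepsilon$-move to a cleanup state $q_c$ carrying a self-loop $Z(\uparrow)$ for each $Z\in\Gamma$, and I declare $q_c$ accepting. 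Crucially, I do not need to test emptiness explicitly: a branch that stops cleaning with symbols still on the stack produces a non-valid stack projection and is therefore not counted as accepting, so the validity requirement in the definition of PDA-II acceptance discharges the emptiness check for free. The correspondence between accepting runs is then routine: a legal PDA-I computation induces a string $s$ whose stack projection is valid (legal pops match actual tops, and the cleanup pops exactly the residual stack including $Z_0$), and conversely the intermediate states force any accepting $s$ to decompose into complete transition chains mirroring a PDA-I run.

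Conversely, I would simulate a PDA-II $M'=(Q,\Sigma,\Gamma,\delta,q_0,F)$ by a PDA-I $M$ over stack alphabet $\Gamma\cup\{Z_0\}$ with a new bottom marker $Z_0\notin\Gamma$, using $Z_0$ to represent the empty PDA-II stack. Since a PDA-I move must always consult and rewrite the top symbol, I translate each atomic operation for every possible top $W\in\Gamma\cup\{Z_0\}$: an input step on $a$ (or an $\varepsilon$-step) pops $W$ and pushes back $W$ while changing state; a push $X(\downarrow)$ pops $W$ and pushes $XW$; and a pop $X(\uparrow)$ is allowed only when the top is exactly $X$, which makes PDA-I legality enforce precisely the validity of $[s]_{\Gamma(\updownarrow)}$. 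For acceptance I add, from each $f\in F$, an $\varepsilon$-move enabled only when the top is $Z_0$ into a single new accepting state; because $M$ accepts by final state and that state is reachable only with top $Z_0$, acceptance occurs exactly when the simulated PDA-II stack is empty.

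The main obstacle is not the mechanical translation of operations but the reconciliation of the two acceptance conventions together with the atomicity mismatch, and in particular arguing that the stack-validity condition of PDA-II matches, on one side, the legality of PDA-I pops, and, on the other side, the residual emptiness enforced by the cleanup phase. Once the bookkeeping of intermediate states and the bottom marker is set up, verifying that each construction preserves the language $L$ reduces to a straightforward induction on the length of $s$ (respectively the number of PDA-I moves), which I would carry out but whose details are routine.
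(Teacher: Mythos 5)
Your argument is sound, but it takes a different route from the paper: the paper does not prove this lemma at all---it is imported as a citation to Watrous---and the paper's own logical flow runs in the opposite direction, using Lemma~1 and this Lemma~2 together to conclude Theorem~1 (the PDA-I/PDA-II equivalence). You instead derive Lemma~2 from Lemma~1 plus a direct mutual simulation, which is essentially the constructive material the paper relegates to its proof sketch of Theorem~1; this is not circular, since that construction nowhere uses Lemma~2. Two points where your write-up is actually more careful than the paper's sketch deserve emphasis. First, in the PDA-I $\to$ PDA-II direction the paper keeps $F$ as the accepting set and says the rest ``can be checked,'' but PDA-II acceptance additionally demands that $[s]_{\Gamma(\updownarrow)}$ be valid, i.e.\ that the stack return to empty; your cleanup state $q_c$ with self-loops $Z(\uparrow)$ for all $Z\in\Gamma$ is genuinely needed to reconcile acceptance-by-final-state with this validity requirement, and your observation that validity itself discharges the emptiness test is the right way to see why the cleanup is harmless. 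Second, in the PDA-II $\to$ PDA-I direction the paper again keeps $F$ unchanged, which would let the simulating PDA-I accept runs whose simulated stack is nonempty (hence invalid for the PDA-II); your extra $\varepsilon$-move enabled only on top symbol $Z_0$ closes that gap. The remaining inductions you defer are indeed routine. In short: correct, logically independent of the citation the paper relies on, and somewhat tighter than the paper's own constructive sketch.
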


From the above two lemmas it follows that PDA-I and PDA-II are equivalent, that is, any language recognized by PDA-I can be also recognized by PDA-II, and vice versa.

\begin{theorem}
Let $\Sigma$ be a finite alphabet. Then for any PDA-I $M$ over $\Sigma$, there is a PDA-II equivalent to PDA-I (recognizing the same language), and vice versa.

\end{theorem}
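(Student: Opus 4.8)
The plan is to obtain the theorem as an immediate consequence of the two characterization lemmas just stated, because PDA-I and PDA-II recognize exactly the same class of languages, the context-free languages. Given a PDA-I $M$ over $\Sigma$ recognizing $L$, the characterization of \cite{Hop79} shows that $L$ is context-free, and the characterization of \cite{Watrous} then yields a PDA-II recognizing the same $L$; the converse direction is entirely symmetric. At the level of recognized languages this is precisely what ``equivalent'' demands, so there is no real obstacle here---the content has already been absorbed into the two lemmas.

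Because that argument merely routes through the class of context-free languages, I would also record the more informative constructive version, which transforms one machine into the other directly. The essential mismatch is one of granularity: a single PDA-I move with $(p,Y_1\cdots Y_m)\in\delta(q,a,X)$ replaces the top symbol $X$ by a whole string in one step, whereas a PDA-II transition performs at most one atomic push or pop. For the PDA-I $\to$ PDA-II direction I would expand each such move into the sequence ``read $a$, then $X(\uparrow)$, then $Y_m(\downarrow),\ldots,Y_1(\downarrow)$'', adding fresh intermediate states so that $Y_1$ finishes on top. The check that $X$ actually sits on top, performed explicitly by PDA-I, is enforced for free in PDA-II by condition~3 of its acceptance definition: the pop $X(\uparrow)$ can occur in a valid string $[s]_{\Gamma(\updownarrow)}$ only when $X$ is on top.

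For the PDA-II $\to$ PDA-I direction I would adjoin a fresh bottom marker to play the role of $Z_0$ and translate each atomic transition into a PDA-I move that reads the same symbol (or $\varepsilon$) while changing the stack by a single symbol: a push becomes $(p,YZ)\in\delta(q,\varepsilon,Z)$ for every possible top $Z$, and a pop becomes $(p,\varepsilon)\in\delta(q,\varepsilon,X)$. The main obstacle I anticipate is reconciling the two acceptance conventions: PDA-II accepts only along computations whose stack string is valid, i.e. the stack ends empty, while the PDA-I of \cite{Hop79} accepts by final state with arbitrary residual stack contents. I would bridge this gap by attaching to each accepting state a cleanup phase that pops symbols until only the bottom marker remains, again exploiting the validity requirement to guarantee acceptance exactly when the simulated stack has been emptied. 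The remaining effort is the bookkeeping of these intermediate and cleanup states rather than any conceptual difficulty.
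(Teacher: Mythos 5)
Your proposal is correct and follows essentially the same route as the paper: the theorem is first obtained immediately from Lemmas 1 and 2 via the class of context-free languages, and the direct construction you sketch (expanding each PDA-I move $(p,X_1\cdots X_n)\in\delta(q,a,X)$ into a read, a pop of $X$, and pushes of $X_n,\ldots,X_1$ through fresh intermediate states, and conversely translating atomic PDA-II pushes and pops into single-symbol stack rewrites over an adjoined bottom marker $Z_0$) is exactly the paper's outline. Your explicit attention to reconciling the empty-stack validity condition of PDA-II with the final-state acceptance of PDA-I via a cleanup phase is a detail the paper's sketch leaves implicit, and it is a worthwhile addition rather than a divergence.
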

\begin{proof}
It follows from Lemmas 1 and 2.  In fact, we can directly prove  that PDA-I and PDA-II can simulate to each other by means of a constructivity method. Here we only outline the basic ideas without presenting the details. 

Given a PDA-I
\[
M = (Q, \Sigma, \Gamma, \delta,  q_0, Z_0, F),
\]
we construct a PDA-II as
\[
M_N=(Q_N, \Sigma, \Gamma, \mu, q_N, F)
\]
where $Q_N=Q\cup\{q_N\}\cup Q'$ for a set of some auxiliary states $Q'$ to be fixed. For any transformation in PDA
\[
(p, \gamma)\in\delta(q,a,X),
\]
where $q, p\in Q$, $a\in \Sigma\cup\{\varepsilon\}$, $X\in\Gamma$,  $\gamma\in\Gamma^*$ and $\gamma=X_1X_2...X_n$ with $X_i\in\Gamma$ ($i=1,2,\ldots,n$), we define corresponding transformations in PDA-II 
\[
{q_p}^{(0)}\in\mu(q, a)
\]
\[
{q_p}^{(1)}\in\mu({q_p}^{(0)}, X( \uparrow ))
\]
\[
{q_p}^{(2)}\in\mu({q_p}^{(1)}, X_n(\downarrow ))
\]
\[
...
\]
\[
{q_p}^{(n)}\in\mu({q_p}^{(n-1)}, X_2 (\downarrow ))
\]
\[
p\in\mu({q_p}^{(n)}, X_1 (\downarrow ))
\]
where ${q_p}^{(0)}, {q_p}^{(1)},...,{q_p}^{(n)}\in Q'$. In addition, we define
\[
q_0\in\mu(q_N,Z_0(\downarrow)).
\]
Then it can be checked that the above $M$ and $M_N$ recognize the same language. 

On the contrary, for  a given PDA-II $M = (Q, \Sigma, \Gamma, \delta, q_0, F)$, we define a PDA-I as  $M_N= (Q, \Sigma, \Gamma', \delta', q_0, Z_0, F)$, where $\Gamma'=\Gamma \cup \{Z_{0}\}$,  and $\delta'$ is defined as follows: 
\begin{enumerate}

\item For  any $ a_{k+1} \in \Sigma\cup\{\varepsilon\}$,  $ r_{k+1} \in \delta(r_{k}, a_{k+1})$ and   $X \in \Gamma \cup \{Z_0\}$,
   \[
   (r_{k+1}, X) \in \delta'(r_{k}, a_{k+1}, X).
   \]

\item For any $a_{k+1} \in  \Gamma(\updownarrow)$, and $r_{k+1} \in \delta(r_{k}, a_{k+1})$,  

(1) if $a_{k+1} = Z_{k+1}(\uparrow )$, then
    \[
    (r_{k+1}, \epsilon) \in \delta'(r_{k}, \epsilon, Z_{k+1});
    \]

(2) if $a_{k+1} =  Z_{k+1}(\downarrow)$, then for any $X \in \Gamma$   
 \[
    (r_{k+1}, Z_{k+1}X) \in \delta'(r_{k}, \epsilon, X).
    \]

Also it can be checked that the above $M$ and $M_N$ recognize the same language.  So, the proof is completed.

\end{enumerate}

\end{proof}

In view of the definitions of PDA-I and PDA-II, we know both models (more exactly, their transformation functions) are {\it nondeterministic}. Concerning PDA-I, it has been proved that PDA-I and deterministic PDA-I are not equivalent \cite{Hop79}, since the language $L_{wwr}=\{ww^R: w\in\ \{0,1\}^*\}$ can be recognized by PDA-I, but it can not be recognized by any deterministic PDA-I \cite{Hop79}.

Now we give the definition of deterministic  PDA-II formally as follows.

\begin{definition}
A deterministic pushdown automaton-II (DPDA-II)  can be defined as
\[
M= (Q, \Sigma, \Gamma, \delta, q_0, F)
\]
where
\begin{itemize}
   \item \( Q \) is finite set of states;
    \item \( \Sigma \) is finite input alphabet;
    \item \( \Gamma \) is finite set of stacks;
    \item \( \delta: Q \times (\Sigma \cup   \Gamma(\updownarrow)\}) \to Q \) a transform function, where $\Gamma(\updownarrow)=\{X(\downarrow):X\in\Gamma\}\cup\{X(\uparrow):X\in\Gamma\}$;
    \item $q_0\in Q$ is an initial state;
    \item \( F\subseteq Q \) is a set of accepting states.
\end{itemize}

\end{definition}

Clearly, if $\Sigma\cup\Gamma(\updownarrow)$ is thought of as the input alphabet, then deterministic PDA-II and PDA-II $M$ are  DFA and nondeterministic finite automata (NFA), respectively.

By using the subset construction method that can construct a DFA  being equivalent to any given NFA \cite{Hop79}, it is clear that for any PDA-II $M$, there is a deterministic PDA-II $M^{'}$ equivalent to $M$. 
So, we have the following result.
\begin{theorem} Let $\Sigma$ be a finite alphabet. For any PDA-II $M=(Q, \Sigma, \Gamma, \delta, q_0, F)$,  there is a DPDA-II $M^{'}=(Q^{'}, \Sigma, \Gamma, \delta^{'}, q_0^{'}, F^{'})$ that is equivalent to $M$.

\end{theorem}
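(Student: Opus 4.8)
The plan is to regard the PDA-II $M$ as a nondeterministic finite automaton with $\varepsilon$-moves over the finite alphabet $\Sigma\cup\Gamma(\updownarrow)$, to apply the classical subset construction, and then to observe that the resulting deterministic finite automaton is literally a DPDA-II recognizing the same language. Writing $\Gamma'=\Sigma\cup\Gamma(\updownarrow)$, the given transition map $\delta\colon Q\times(\Sigma\cup\{\varepsilon\}\cup\Gamma(\updownarrow))\to 2^{Q}$ is exactly the transition function of an $\varepsilon$-NFA with input alphabet $\Gamma'$, the symbol $\varepsilon$ serving as spontaneous (silent) moves. This is the correspondence already noted just before the statement.

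First I would perform the usual subset construction. Let $E(S)$ denote the $\varepsilon$-closure of a set $S\subseteq Q$, i.e. all states reachable from $S$ by $\varepsilon$-moves, and set $Q'=2^{Q}$, $q_0'=E(\{q_0\})$, and $F'=\{S\in Q':S\cap F\neq\emptyset\}$; for each $a\in\Gamma'$ define
\[
\delta'(S,a)=E\Big(\bigcup_{q\in S}\delta(q,a)\Big).
\]
Since $\delta'$ is a total, single-valued map $Q'\times\Gamma'\to Q'$ whose domain contains no $\varepsilon$, the tuple $M'=(Q',\Sigma,\Gamma,\delta',q_0',F')$ satisfies the definition of a DPDA-II. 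By the correctness of the subset construction \cite{Hop79}, for every $u\in{\Gamma'}^{*}$ we have $(\delta')^{*}(q_0',u)\in F'$ if and only if there is a string $s\in(\Gamma'\cup\{\varepsilon\})^{*}$ with $[s]_{\Gamma'}=u$ and $\delta^{*}(q_0,s)\cap F\neq\emptyset$.

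The step requiring care, and the main obstacle, is to reconcile the third acceptance condition---validity of the stack operations---with this construction, since PDA-II acceptance is not merely acceptance by the underlying automaton. The crucial observation is that both projections $[s]_{\Sigma}$ and $[s]_{\Gamma(\updownarrow)}$ depend only on the guessed string $s$, and neither is changed by deleting the $\varepsilon$-symbols from $s$; thus $[s]_{\Sigma}=[u]_{\Sigma}$ and $[s]_{\Gamma(\updownarrow)}=[u]_{\Gamma(\updownarrow)}$ for $u=[s]_{\Gamma'}$. Consequently the validity of the stack string is a property of $u$ alone, orthogonal to the state-collapsing carried out by the subset construction; note also that ``determinism'' of $M'$ refers only to $\delta'$ being single-valued, while the witness string $s$ is still existentially quantified.

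Assembling these facts yields the equivalence. If $x\in L(M)$, choose a witness $s$ satisfying the three conditions for PDA-II acceptance and put $u=[s]_{\Gamma'}$; then $[u]_{\Sigma}=x$, the string $[u]_{\Gamma(\updownarrow)}$ is valid, and $(\delta')^{*}(q_0',u)\in F'$, so $x\in L(M')$. Conversely, if $x\in L(M')$ with witness $u$, the subset construction supplies some $s$ with $[s]_{\Gamma'}=u$ and $\delta^{*}(q_0,s)\cap F\neq\emptyset$; since $[s]_{\Sigma}=x$ and $[s]_{\Gamma(\updownarrow)}=[u]_{\Gamma(\updownarrow)}$ is valid, $x\in L(M)$. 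Hence $L(M')=L(M)$. All the automaton-theoretic content is supplied by the standard $\varepsilon$-NFA-to-DFA conversion, so beyond correctly threading the stack-validity condition through it I expect no genuine difficulty.
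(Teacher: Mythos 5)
Your proof is correct and follows essentially the same route as the paper: treat $\Sigma\cup\Gamma(\updownarrow)$ as the input alphabet, apply the subset construction to get a DFA-style DPDA-II, and observe that the stack-validity condition depends only on the witness string and so passes through unchanged (the paper packages this last step as the identity $L(\Sigma)=\left[L(\Sigma\cup\Gamma(\updownarrow))\cap L(\Gamma(\updownarrow),\Sigma)\right]_{\Sigma}$, while you trace witnesses directly). Your explicit handling of the $\varepsilon$-closure and your remark that determinism constrains only $\delta'$ and not the existentially quantified witness are welcome clarifications of points the paper leaves implicit.
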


\begin{proof}

First, we regard $\Sigma \cup   \Gamma(\updownarrow)$ as the same input alphabet of $M$ and $M^{'}$. Then $M^{'}$ follows from the subset construction method that transfers  a nondeterministic finite automaton to deterministic finite automaton  \cite{Hop79}.

If  $\Sigma \cup   \Gamma(\updownarrow)$ are regarded as the input alphabet of $M$ and $M^{'}$, then we denote $L(\Sigma \cup  \Gamma(\updownarrow)) \subseteq (\Sigma \cup  \Gamma(\updownarrow))^*$ as the language over $\Sigma \cup  \Gamma(\updownarrow)$  recognized by $M$ and $M^{'}$, and clearly it is a regular language over $\Sigma \cup  \Gamma(\updownarrow)$.
The rest is to show that $M$ and $M^{'}$ also recognize the same language over $\Sigma$.

Let $L(\Sigma) $ and $L(\Sigma)^{'} $ denote the languages recognized by $M$ and $M^{'}$,
respectively. Our aim is to show $L(\Sigma) =L(\Sigma)^{'} $.

As is shown \cite{Watrous}, the language denoted by $L(\Gamma(\updownarrow))$, consisting of all valid strings of stacks in $ \Gamma(\updownarrow)^*$,  is context-free. Then the extended language, denoted by $L(\Gamma(\updownarrow),\Sigma)$, which is composed by adding any elements from $\Sigma^*$ to any position in the strings of $L(\Gamma(\updownarrow))$, is also context-free. Formally,  $L(\Gamma(\updownarrow),\Sigma)$ can be described mathematically as follows.
\begin{equation}
L(\Gamma(\updownarrow),\Sigma)=\{w:w\in (\Gamma(\updownarrow)\cup\Sigma)^{*}, [w]_{\Gamma(\updownarrow)}\in L(\Gamma(\updownarrow))\}.
\end{equation}
Now we have 
\begin{eqnarray}
L(\Sigma)&=&\left[L(\Sigma \cup  \Gamma(\updownarrow))\cap  L(\Gamma(\updownarrow),\Sigma)\right]_{\Sigma} \nonumber\\
&=&L(\Sigma)^{'},
\end{eqnarray}
and we have completed the proof.

\end{proof}

For a deterministic PDA-II (DPDA-II) $M$ with the input alphabet $\Sigma$ and the set of stack symbols $\Gamma$, if $\Sigma\cup\Gamma(\updownarrow)$ is thought of as the input alphabet, then $M$ is a DFA. From the above proof  a corollary follows.

\begin{corollary}

For any (deterministic) PDA-II $M$ with the input alphabet $\Sigma$ and the set of stack symbols $\Gamma$, let $L(\Sigma\cup\Gamma(\updownarrow))$ denote the language recognized by $M$ with the input alphabet  $\Sigma\cup\Gamma(\updownarrow)$, then
the language recognized by $M$ is $\left[L(\Sigma \cup  \Gamma(\updownarrow))\cap  L(\Gamma(\updownarrow),\Sigma)\right]_{\Sigma}$.

\end{corollary}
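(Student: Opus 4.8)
The plan is to prove the formula by a direct, line-by-line unfolding of the acceptance condition for PDA-II (Definition 6), matching each of its three clauses to one ingredient of the claimed expression. Recall that $x \in \Sigma^*$ is recognized by $M$ exactly when some witness $s \in (\Gamma(\updownarrow) \cup \Sigma \cup \{\varepsilon\})^*$ satisfies (i) $[s]_\Sigma = x$, (ii) $\delta^*(q_0, s) \cap F \neq \emptyset$, and (iii) $[s]_{\Gamma(\updownarrow)} \in L(\Gamma(\updownarrow))$. I would show that the existence of such a witness is equivalent to $x \in [L(\Sigma \cup \Gamma(\updownarrow)) \cap L(\Gamma(\updownarrow),\Sigma)]_\Sigma$, by reading clause (ii) as membership of a cleaned-up $s$ in the finite-automaton language $L(\Sigma \cup \Gamma(\updownarrow))$ and clause (iii) as membership in $L(\Gamma(\updownarrow),\Sigma)$ through its defining equation (5).

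First I would normalize the witness by deleting every occurrence of the empty-string symbol, setting $s' = [s]_{\Sigma \cup \Gamma(\updownarrow)} \in (\Sigma \cup \Gamma(\updownarrow))^*$. Because $\varepsilon$ lies in neither $\Sigma$ nor $\Gamma(\updownarrow)$, this deletion changes neither projection, so $[s']_\Sigma = [s]_\Sigma = x$ and $[s']_{\Gamma(\updownarrow)} = [s]_{\Gamma(\updownarrow)}$. Next I would invoke the observation already recorded in the text that, once $\Sigma \cup \Gamma(\updownarrow)$ is taken as the input alphabet, $M$ is precisely the finite automaton (with the former $\varepsilon$-transitions of $\delta$ serving as its $\varepsilon$-moves) whose language is $L(\Sigma \cup \Gamma(\updownarrow))$. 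Thus clause (ii) holds for $s$ if and only if $s' \in L(\Sigma \cup \Gamma(\updownarrow))$, and by equation (5) clause (iii) holds if and only if $s' \in L(\Gamma(\updownarrow),\Sigma)$. Combining the three clauses, $x$ is recognized by $M$ iff there is some $s' \in L(\Sigma \cup \Gamma(\updownarrow)) \cap L(\Gamma(\updownarrow),\Sigma)$ with $[s']_\Sigma = x$, which is exactly the statement $x \in [L(\Sigma \cup \Gamma(\updownarrow)) \cap L(\Gamma(\updownarrow),\Sigma)]_\Sigma$. This reproduces, for a single standalone machine, the equality displayed as equation (6) in the proof of the preceding theorem.

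The deterministic case requires no extra work: for a DPDA-II there are no $\varepsilon$-transitions, the normalization step is vacuous, $M$ is a DFA over $\Sigma \cup \Gamma(\updownarrow)$, and clause (ii) reads $\delta^*(q_0, s') \in F$, still equivalent to $s' \in L(\Sigma \cup \Gamma(\updownarrow))$. Hence the same chain of equivalences yields the formula uniformly for both PDA-II and DPDA-II, as the corollary asserts.

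I expect the one genuinely delicate point to be the bookkeeping in clause (ii): I must make sure that $L(\Sigma \cup \Gamma(\updownarrow))$ is understood as a language over the alphabet $\Sigma \cup \Gamma(\updownarrow)$, with $\varepsilon$ acting as a true $\varepsilon$-move rather than as a literal letter, so that passing from $s$ to $s'$ neither creates nor destroys an accepting run and the correspondence between witnesses $s$ and words $s'$ is exact. Everything else is a faithful transcription of Definition 6 and equation (5), with no further computation needed.
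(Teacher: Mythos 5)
Your proposal is correct and follows essentially the paper's own route: the corollary is exactly the identity $L(\Sigma)=\left[L(\Sigma\cup\Gamma(\updownarrow))\cap L(\Gamma(\updownarrow),\Sigma)\right]_{\Sigma}$ asserted as equation (6) in the proof of the preceding theorem, and you verify it by unfolding the three clauses of the acceptance definition, which is precisely the justification the paper leaves implicit. Your explicit handling of the $\varepsilon$-symbols and of the deterministic case supplies detail the paper omits, but it is the same argument.
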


We give an example  to show that the language $L_{wwr}=\{ww^R: w\in \{0,1\}^*\}$ can be recognized PDFA-II in the following (though it is not recognized by DPDA-I \cite{Hop79}).
	
	\begin{example}	

The language $L_{wwr}=\{ww^R: w\in \{0,1\}^*\}$ can be recognized by a PDA-II $M(L_{wwr})=(Q,\Sigma,\Gamma,\delta,q_0,F)$ illustrated by the following figure of transformations, where 
\begin{itemize}
\item $Q=\{q_0,q_1,q_2,q_1^{'},q_1^{''}, q_2^{'},q_2^{''},q_a\}$;
\item $\Sigma=\{0,1\}$;
\item $\Gamma=\{Z_0,X_0,X_1\}$;
\item  $q_0\in Q$ is an initial state, and $F\subseteq Q$ is a subset of accepting states,
\end{itemize}
and $\delta$ is illustrated in Fig.3 as follows.

\begin{figure*}[htbp]
		\centering
		\includegraphics[width=\linewidth]{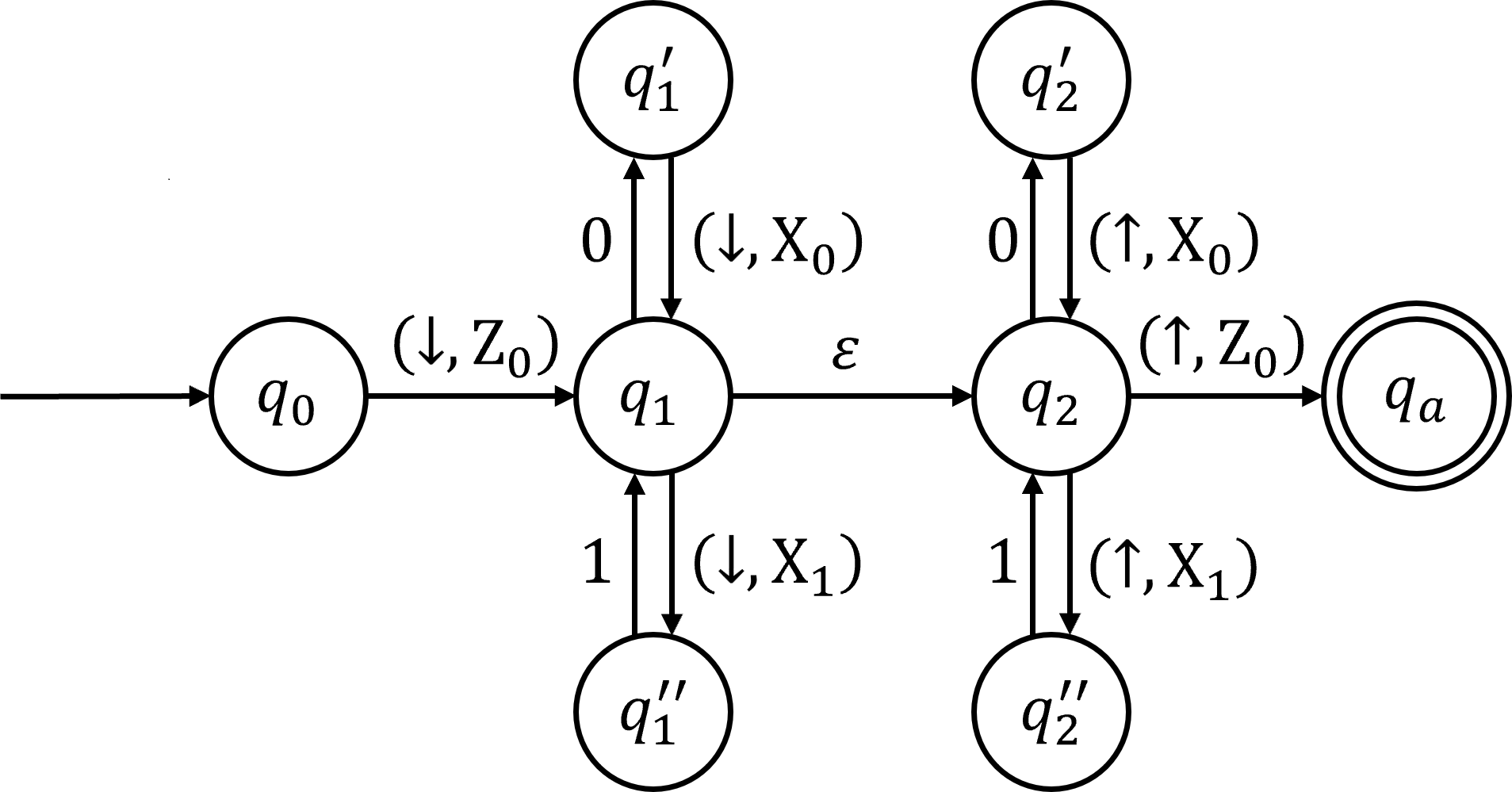}
		\setlength{\abovecaptionskip}{-0.01cm}
		\caption{PDA-II $M(L_{wwr})$ recognizes $L_{wwr}$.}
		\label{3}
	\end{figure*}

\end{example}

Here we would like to point out two-stack machines are more general models, and DPDA-II as well as DFA are special two-stack machines.

\begin{remark}

Let $M=(Q,\Sigma,\Gamma, \Delta, \delta, q_{0}, F)$  be a two-stack machine. Then we have:

\begin{itemize}
   \item if only one stack is used and tape symbols are removed, then $M$ reduces to a DPDA-II;
    \item if two stacks and tape symbols are removed, then $M$ reduces to a DFA.
       \end{itemize}

\end{remark}

Finally, we would consider quantum version of these models.
In light of the previous definitions, we can define quantum pushdown automata. To avoid confusion with the definitions of quantum pushdown automata in literature \cite{Go00,QY04,Nak14}, we name it as quantum pushdown automata-II (QPDA-II). We suppose some background concerning quantum automata is known and we can refer to \cite{Qiu25}.

Suppose $Q$ is finite set of states and denote $|Q|$ as the number of states in $Q$. Then let ${\cal H}_Q$ be the Hilbert space with  basis  identified with $Q$.

\begin{definition}
A quantum pushdown automaton-II (QPDA-II)  can be defined as
 \[
M= (Q, \Sigma, \Gamma, {\cal U}, q_0, F)
\]
where
\begin{itemize}
   \item \( Q \) is finite set of states;
    \item \( \Sigma \) is finite input alphabet;
    \item \( \Gamma \) is finite set of stacks;
    \item ${\cal U}=\{U_{t}: t\in\Sigma\cup \Gamma(\updownarrow)\}$, where $U_t$ is unitary operator on Hilbert space ${\cal H}_Q$;
    \item $|q_0\rangle$ is an initial state, where $q_0\in Q$;
    \item \( F\subseteq Q \) is a set of accepting states.
\end{itemize}

\end{definition}

For any $x\in \Sigma^*$, denote 
\begin{equation}
 S_x=\{s\in (\Gamma(\updownarrow)\cup \Sigma)^*: [s]_{\Sigma}=x,  [s]_{\Gamma(\updownarrow)}\in Val(\Gamma(\updownarrow))\},
 \end{equation}
 where $Val(\Gamma(\updownarrow))$ denotes the set consisting of all valid strings of stacks changing.

As usual, the accepting probability of QPDA-II $M$ is defined as follows.  For any $x\in \Sigma^*$, the probability $P_a(x)$ of $x$ being accepted by $M$ is 
\begin{equation}
P_a(x)=\sup\{\sum_{q\in F}\langle q|U_s|q_0\rangle: s\in S_x\}
\end{equation}
where $U_s=U_{a_k}U_{a_{k-1}}\cdots U_{a_1}$ if $s=a_1a_2\cdots a_k$.


In \cite{Qiu05}, quantum stack machines are defined. Here we give a different definition, named as quantum 2-stack machines-II (Q2SM-II).

\begin{definition}
A  quantum two-stack machine is defined
as $$M=(Q,\Sigma,\Gamma, \Delta, \delta, q_{0}, F)$$ where
where
\begin{itemize}
   \item \( Q \) is finite set of states;
    \item \( \Sigma \) is finite input alphabet;
    \item \( \Gamma \) is finite set of stacks;
     \item \( \Delta \) is finite set of tape symbols;
    \item ${\cal U}=\{U_{t}: t\in\Sigma\cup \Gamma(\updownarrow)\cup \Delta\}$, where $U_t$ is unitary operator on Hilbert space ${\cal H}_Q$;
    \item $|q_0\rangle$ is an initial state, where $q_0\in Q$;
    \item \( F\subseteq Q \) is a set of accepting states.
\end{itemize}
\end{definition}

We can also define the accepting probability for Q2SM-II $M$. For any $x\in \Sigma^*$, denote 
\begin{equation}
 S_x^{'}=\{s\in (\Gamma_{1,2}(\updownarrow)\cup \Sigma\cup \Delta)^*: [s]_{\Sigma}=x,  [s]_{\Gamma_{1,2}(\updownarrow)}\in Val(\Gamma_{1,2}(\updownarrow))\},
 \end{equation}
 where $Val(\Gamma_{1,2}(\updownarrow))$ denotes the set consisting of all valid strings of stacks changing.

The accepting probability of Q2SM-II $M$ is defined as follows.  For any $x\in \Sigma^*$, the probability $P_a(x)$ of $x$ being accepted by $M$ is 
\begin{equation}
P_a(x)=\sup\{\sum_{q\in F}\langle q|U_s|q_0\rangle: s\in S_x^{'}\}
\end{equation}
where $U_s=U_{a_k}U_{a_{k-1}}\cdots U_{a_1}$ if $s=a_1a_2\cdots a_k$.

It is seen that Q2SM-II are more general quantum computing models, since both QPDA-II and quantum finite automata (QFA) \cite{MC00} are special cases of Q2SM-II. Also, QFA are special case of QPDA-II.

The relationships between Q2SM-II and quantum Turing machines (QTMs)  \cite{BV97,De85} need to be further considered. Also, how to do simulations by quantum circuits is  to be studied furthermore, as QTMs simulated by quantum circuits \cite{Yao}.



\begin{thebibliography}{4}
\bibitem{BV97} E. Bernstein and U. Vazirani, Quantum
complexity theory, SIAM J. Comput. 26 (1997) 1411-1473.
\bibitem{De85} D. Deutsh, Quantum theory, the Church-Turing principle and the universal quantum computer,  Proc. Roy. Soc. London A 400 (1985) 97-117.
\bibitem{Fi66} P.C. Fischer, Turing machine with restricted memory access, Information and Control 9 (4) (1966) 364-379.
\bibitem{Go00} M. Golovkins, Quantum Pushdown Automata, in: Proc. 27th Conf. on Current Trends in Theory and Practice of Informatics, Milovy, Lecture Notes in Computer Science, Vol. 1963, Spring-Verlag, Berlin, 2000, pp. 336-346.
\bibitem{Hop79}J.E. Hopcroft  and  J.D. Ullman,   Introduction to Automata Theory, Languages, and Computation, Addision-Wesley, New York,  1979.
\bibitem{IJMP2020}O. H. Ibarra, J. Jirasek Jr., I. McQuillan, and L. Prigioniero, Space Complexity of Stack Automata Models, DLT 2020, LNCS 12086, pp. 137-149, 2020.
\bibitem{Kra99}M. Kravtsev, Quantum finite one-counter automata, in: SOFSEM'99, Lecture Notes in Computer Science, Vol.1725, Springer-Verlag, Berlin, 1999, pp.431-440.
\bibitem{Min61} M.L. Minsky, Recursive unsolvability of Post's problem of `tag' and other topics in the theory of Turing machines, Annals of Mathematics 74 (3) (1961) 437-455.
\bibitem{MC00} C. Moore and J.P. Crutchfield, Quantum automata and quantum grammars,   Theoret. Comput. Sci. 237 (2000) 275-306. Also quant-ph/9707031, 1997.
\bibitem{Nak14} M. Nakanishi,  Quantum Pushdown Automata with a Garbage Tape, arXiv:1402.3449.
\bibitem{Qiu05} D.W. Qiu, Simulations of quantum Turing machines by quantum multi-stack machines, quant-ph/0501176, 2005.
\bibitem{QY04}D.W. Qiu and M.S. Ying, Characterization of quantum automata, Theoret. Comput. Sci.
 312 (2004) 479-489.
 \bibitem{Qiu25} D.W. Qiu, Theoretical Foundations of Quantum Computing, Morgan Kaufmann, 2025.
 \bibitem{Watrous}J. Watrous, Introduction to the Theory of Computing, 2020.
 \bibitem{YKI05}T. Yamasaki, H. Kobayashi, H. Imai, Quantum versus deterministic counter automata, Theoret. Comput. Sci. 334 (2005) 275-297.
\bibitem{Yao}A.C. Yao, Quantum circuit complexity,  in: Proc. 34th IEEE Symp. on Foundations of Computer science, 1993,  pp. 352-361.
\end{thebibliography}
\end{document}